\def\eqref#1{equation~\ref{#1}}
\def\1{\bm{1}}
\DeclareMathAlphabet{\mathsfit}{\encodingdefault}{\sfdefault}{m}{sl}
\SetMathAlphabet{\mathsfit}{bold}{\encodingdefault}{\sfdefault}{bx}{n}
\newcommand{\cmark}{\textcolor{blue}{\ding{51}}}%
\newcommand{\xmark}{{\ding{55}}}%
\newcommand{\rem}{m}
\newtheorem{thm}{Theorem}[section]
\newtheorem{proposition}[thm]{Proposition}
\newtheorem{definition}{Definition}
\newtheorem{lemma}[thm]{Lemma}
\title{Ranking evaluation metrics from a \\group-theoretic perspective}
\author{Chiara Balestra \thanks{This research was supported by the research training group \enquote{Dataninja} (Trustworthy AI for Seamless Problem Solving: Next Generation Intelligence Joins Robust Data Analysis) funded by the German federal state of North Rhine-Westphalia.} \\
TU Dortmund\\
Germany \\
\And
Andreas Mayr \\
IMBIE, University Hospital of Bonn \\
Germany \\
\And
Emmanuel Müller \\
TU Dortmund\\
Germany \\
}
\begin{document}

\maketitle

\begin{abstract}

    Confronted with the challenge of identifying the most suitable metric to validate the merits of newly proposed models, the decision-making process is anything but straightforward. Given that comparing rankings introduces its own set of formidable challenges and the likely absence of a universal metric applicable to all scenarios, the scenario does not get any better. Furthermore, metrics designed for specific contexts, such as for Recommender Systems, sometimes extend to other domains without a comprehensive grasp of their underlying mechanisms, resulting in unforeseen outcomes and potential misuses. Complicating matters further, distinct metrics may emphasize different aspects of rankings, frequently leading to seemingly contradictory comparisons of model results and hindering the trustworthiness of evaluations.
    
    We unveil these aspects in the domain of ranking evaluation metrics. Firstly, we show instances resulting in inconsistent evaluations, sources of potential mistrust in commonly used metrics; by quantifying the frequency of such disagreements, we prove that these are common in rankings. Afterward, we conceptualize rankings using the mathematical formalism of symmetric groups detaching from possible domains where the metrics have been created; through this approach, we can rigorously and formally establish essential mathematical properties for ranking evaluation metrics, essential for a deeper comprehension of the source of inconsistent evaluations. We conclude with a discussion, connecting our theoretical analysis to the practical applications, highlighting which properties are important in each domain where rankings are commonly evaluated. In conclusion, our analysis sheds light on ranking evaluation metrics, highlighting that inconsistent evaluations should not be seen as a source of mistrust but as the need to carefully choose how to evaluate our models in the future. 
    
\end{abstract}

\section{Introduction}


Evaluating methods is fundamental in any machine learning field, but it is not straightforward finding an \emph{appropriate} evaluation metric that accurately assesses a method's strengths without unfairly biasing comparisons to other approaches. Comparing rankings is particularly challenging: inconsistencies often appear in the produced evaluations, and, notably, inconsistencies diminish users' trust in the methods and evaluations. \par 
Rankings pop up in several domains, from Recommender Systems (RS) and Information Retrieval (IR) techniques~\cite{adomavicius_toward_2005,mogotsi_christopher_2010}, to feature ranking and selection approaches~\cite{khaire2022stability} as well as in (fair) rank aggregation methods~\cite{lin2010rank}. They are looked at as \enquote{interpretable} and easy to understand by humans; as an example, let us think about a scoring system that represents job applicants' characteristics, where a high score is an indicator of a better fit for the position. However, evaluating rankings is not at all that simple, and contradictory evaluations are commonplace. Furthermore, several metrics are appropriately created to evaluate, for example, Recommender Systems or rank aggregation approaches; however, the same metrics are often used in other contexts without a sufficient understanding of their evaluation. \par
First, we show the existence of inconsistencies among most pairs of metrics; given the frequency of inconsistencies' occurrences, evaluations of methods involving rankings are often not trustworthy. We then introduce a list of desirable theoretical properties for ranking evaluation metrics and provide the mathematical framework underpinning each of them; rankings metrics can be easily transferred to functions on mathematical groups, specifically on symmetric groups, thus allowing us to detach from specific machine learning domains. The choice is dictated by the fact that they represent the most general mathematical structure on which we could represent rankings. Profiting from a strong mathematical theoretical interface, we look to answer the question \emph{which mathematical properties are essential for evaluating rankings?} Thus, while most of the existing literature predominantly confines itself rather to narrow, highly specific contexts, our work provides a theoretical framework beyond specific contexts of application. We further assert that almost none of the metrics qualifies as a mathematical distance, as fundamental characteristics are not satisfied. Although symmetric groups offer a broad generalization, we remain mindful of the specific contexts in which the metrics were developed. We highlight the contexts where these properties are particularly valuable in a conclusive discussion. 



\section{Related work}\label{sec:relatedwork}

The literature on ranking evaluation metrics is mostly highly context-specific. Particularly developed for RS and IR evaluation, we find several works exploring the relationships among the metrics~\cite{valcarce_robustness_2018,gunawardana_evaluating_2015,silveira_how_2019}. \cite{herlocker_evaluating_2004} proposes a theoretical division of the metrics for comparing collaborative filtering RS, while \cite{liu2009learning} describes most of the metrics typically used for RS and IR techniques. \cite{jarvelin_cumulated_2002} presents various metrics based on cumulative gain, highlighting their main advantages and drawbacks, \cite{hoyt_unified_2022} introduces a theoretical foundation for rank-based evaluation metrics, and~\cite{amigo_axiomatic_2018} defines a set of properties for IR metrics and shows that none of the existing ones satisfy all the properties proposed. Other works focus on metrics for RS and their intrinsic properties or on ranking metrics for the top-$n$ recommendations~\cite{buckley_retrieval_2004,valcarce_assessing_2020}. 
Real-world applications such as the design of strategies based on customers' feedback and allocation of priorities in R\&D extended the interest in defining distances among rankings where the focus of the problem statement is \emph{rank aggregation}~\cite{10.1145/371920.372165,sculley2007rank}. Examples of similarly scoped works are~\cite{cook1986axiomatic,fligner1986distance}. 
An interesting generalization work is presented by~\cite{diaconis}, that focuses on six metrics on symmetric groups; we find among them, \emph{Kendall's $\tau$} and \emph{Spearmann's $\rho$} while the other considered metrics are rather uncommon in machine learning. The work studies them from a statistical and theoretical perspective and defines some properties, among which the \emph{interpretability}, \emph{tractability}, \emph{sensitivity}, i.e., the ability of one metric to range among the available counter-domain, and \emph{theoretical availability}. In detail, the \emph{interpretability} defined in~\cite{diaconis} discussed whether the metrics measure something humanly tangible, the \emph{tractability} studies the so-called computational complexity in computer science, and the \emph{theoretical availability} asks whether a metric is studied and used enough in the state-of-the-art works. We will reintroduce one of the properties in~\cite{diaconis}, e.g., the \emph{right-invariance}, in our work into our \emph{Robustness property}\color{black}.   \par

The interest in fair and trusted choices of evaluation metrics grows fast in computer science. \cite{tamm_quality_2021} is an example where some of the ranking evaluation metrics are harshly criticized for their comparisons' reliability. In other domains, the state-of-the-art literature started defining essential properties for metrics~\cite{gosgens_good_2021,gosgens_systematic_2021}; we aim to fill the gap for ranking evaluation metrics.

\section{Metrics}\label{sec:metrics}

In the state-of-the-art literature, we find various metrics meant to evaluate ranking in specific contexts. This is the case for most offline Recommender Systems metrics, some evaluation metrics for prediction models, and rank aggregation approaches. Some of these metrics started spreading to other domains following the need to evaluate rankings. However, this is not always a good idea as the domain defines the evaluation's exigencies. In our work, we consider metrics evaluating full rankings that can be easily transferred to adjacent domains and cut out from the analysis all those metrics that require context-specific information, e.g., diversity in Recommender Systems. We refer to the group as ranking evaluation metrics; a complete list is summarized in Table~\ref{tab:listmetrics}. 

\newcolumntype{G}{>{\raggedright\arraybackslash}m{9cm}}
\newcolumntype{U}{>{\raggedright\arraybackslash}m{4cm}}
\newcolumntype{L}{>{\RaggedLeft\arraybackslash}m{1cm}}

\setlength{\extrarowheight}{2pt}
\begin{table*}[!t]

\begin{tabularx}{\textwidth}{   UG}
    \toprule
    \makecell[l]{Ranking aware \\metrics} &\textbf{nDCG}, \textbf{DCG}, \textcolor{blue}{meanRank}, \textcolor{blue}{GMR}, \textcolor{blue}{MRR} \\ \midrule
    \makecell[l]{Metrics assigning equal \\importance to each position} & \makecell[l]{\textit{SMAPE}, \textit{MAPE}, \textit{MAE}, \textit{RMSE}, \textit{MSE}, \textit{R$^2$ score}, {NDPM}, \\Spearmann $\rho$, Kendall's $\tau$ }\\ \midrule
    Set based metrics & \makecell[l]{\underline{markedness},\underline{ PT}, \underline{ recall}, \underline{LR+}, \underline{Jaccard index}, \underline{F1 score}, \underline{FDR}, \\\underline{accuracy}, \underline{MCC}, \underline{TNR}, \underline{fallout}, \underline{FNR}, \underline{LR-} \underline{informedness}, \\\underline{NPV}, \underline{FOR}, \underline{BA}, \underline{FM}, \underline{precision} } \\ \bottomrule
    \end{tabularx}
    \caption{\label{tab:listmetrics}List of considered metrics; bold, italic, underlined, and plain text indicate \textbf{CGB}, \emph{EB}, \underline{CMB}, and CB metrics. Other metrics are blue color-coded.}

\end{table*}
\begin{figure}
    \centering
    \includegraphics[width=0.45\linewidth]{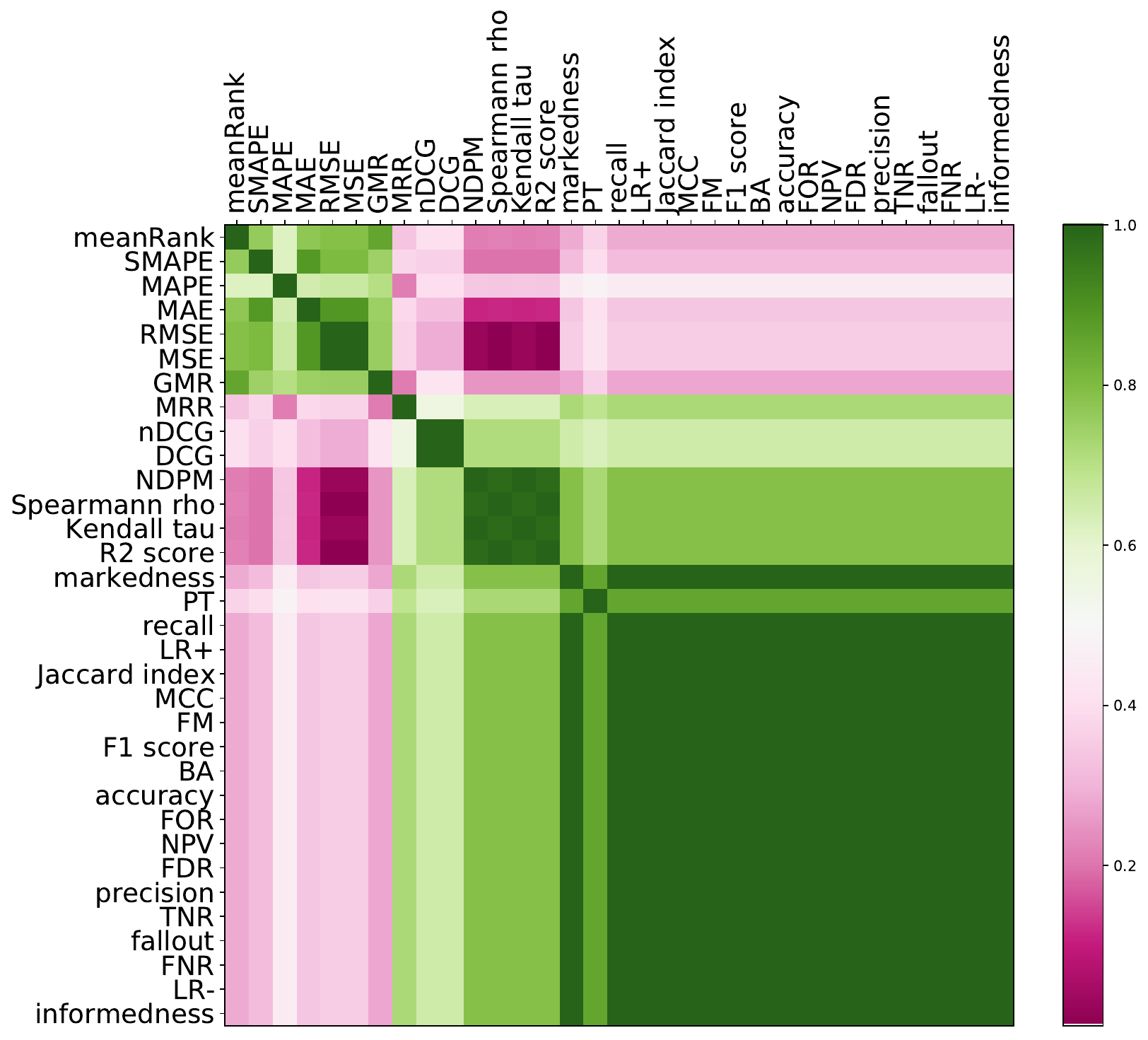}
    \hfill
    \includegraphics[width=0.54\linewidth]{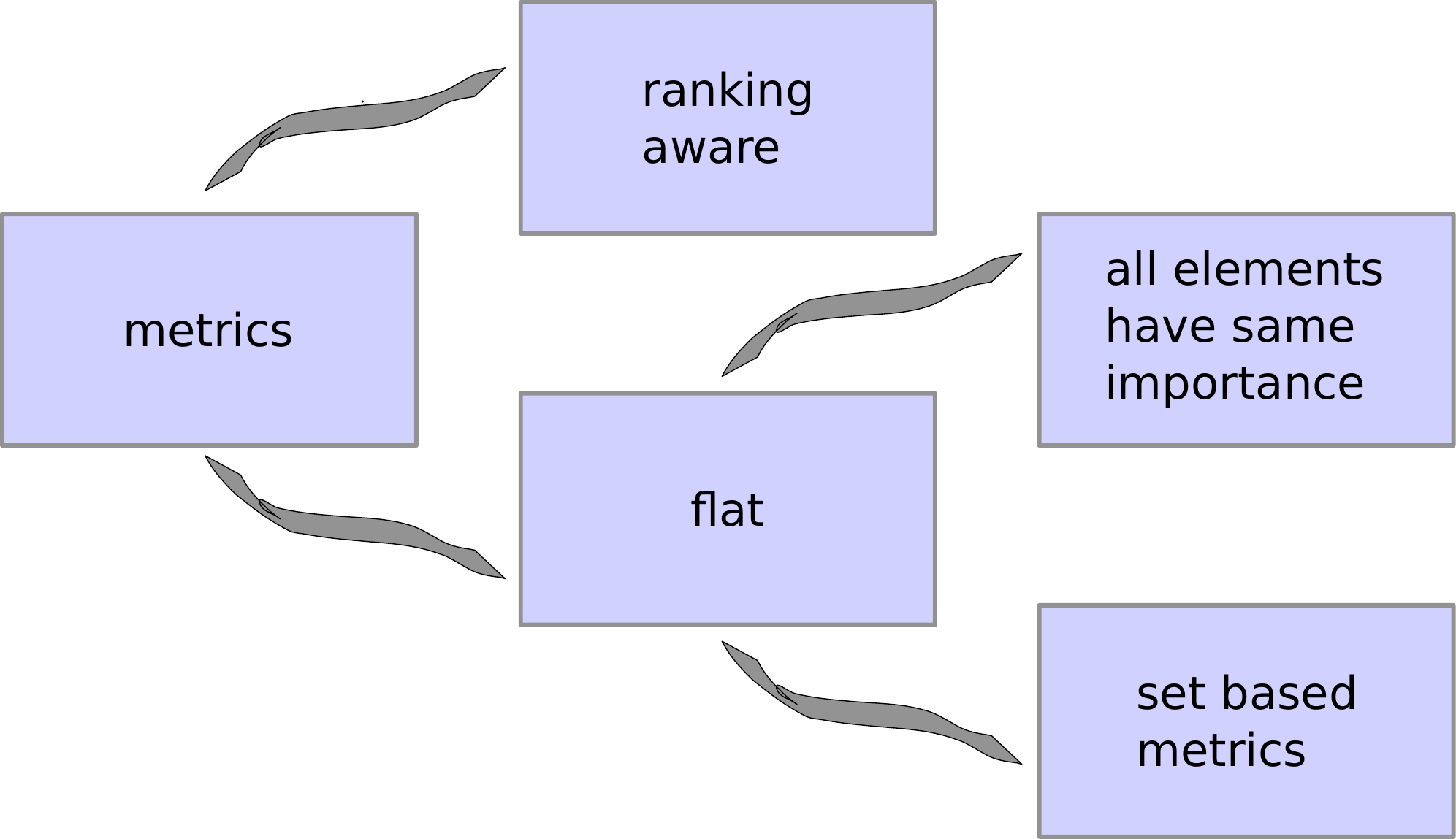}

    \caption{\textbf{On the left:} Heatmap of the agreement ratios among pairs of ranking evaluation metrics. \textbf{On the right:} The theoretical subdivision of the metrics; the }
    \label{fig:heatmap}
\end{figure}
We categorize the ranking evaluation metrics under two different theoretical aspects.
One subdivision derives from their \enquote{awareness} of the position of single items in the rankings: \emph{Ranking aware metrics} satisfy this criterion while \emph{flat metrics} do not. In this second group, we find the \emph{set-based metrics} and the ones assigning equal importance to each position. The subdivision is shown in Figure~\ref{fig:heatmap} on the right. 
From their theoretical definition, we individuate four main groups: \emph{confusion matrix-based CMB metrics} focus on the number of correctly retrieved elements and are essentially set-based metrics; \emph{correlation-based CB metrics} quantify the ordinal association between the two rankings from a statistical perspective; \emph{error-based EB metrics} are often used to analyze the performance of predicting models and are flat metrics assigning equal importance to each position; Finally, \emph{cumulative gain-based CGB metrics} focus on the rankings of the single elements;  additional explanations on how the single metrics have been classified and the definition of the single metrics are available in Appendix~\ref{app:defmetrics}. \color{black}


\section{Ranking evaluation metrics on symmetric groups}\label{sec:theory}


{To generalize the metrics over an abstract structure, we introduce \emph{symmetric groups} $S_n$. Given a finite set $\mathcal{N} = \{1, \ldots, n\}$, the \emph{symmetric group} $S_n$ is the set of bijective functions from $\mathcal{N}$ to $\mathcal{N}$, i.e., the rankings or \emph{permutations} of elements in $\mathcal{N}$; $S_n$ has size $n!$. Permutations are designed with lowercase Greek letters, i.e., $\sigma\in S_n$. Exceptionally, $\text{id}$ indicates the group identity or \emph{identity function}; the \emph{identity function} defines mathematically the supposition that the identical ordering assigns to each 'item' $i$ its position $i$, i.e., that the items' names correspond to their positioning. \color{black} $\sigma(i)$ indicates the position in which item $i$ is sent by $\sigma$ and, given $\sigma,\nu \in S_n$, $\sigma \circ \nu \in S_n$ is a new ranking defined by $\sigma \circ \nu (i) = \sigma(\nu(i)), \forall i \in\{1,\ldots,n\}$; $\circ$ is the group operation and it is not commutative, i.e., generally $\sigma\ \circ\ \nu \neq\nu\ \circ\ \sigma$. 
$\sigma_{|k} = (\sigma(1), \ldots, \sigma(k))$ indicates the ranking of the first $k$ elements; metrics$@k$ consider exclusively the first $k$ ranked elements.} 
{Finally, a \emph{(single) swap} is a permutation $\sigma = (j\ k)\in S_n$, swapping only the two elements $j,k$ in $\mathcal{N}$; \cite{farzad} refers to them as \emph{transpositions}. 
} 

\subsection{Clustering by agreement}\label{sec:clustering}
Our work is mainly justified by the lack of \enquote{consistent} evaluation of rankings when using different metrics.
A \emph{ranking evaluation metric} is a function $\rem: S_n\times S_n \rightarrow \mathbb{R}_+$, taking two permutations as input and returning a real number. In some cases, metrics take only one ranking as input;  we refer to them as \emph{single input metrics}. \color{black}All the given definitions work correspondingly for one-input metrics.  
\begin{definition}
    {Two metrics $\rem_1, \rem_2$ are \emph{non-consistent} if there exists $\sigma ,\mu, \nu \in S_n$ such that the following two conditions hold:}
    \begin{align}\label{eq:inconsistency}
        \begin{array}{ll}
            \rem_1(\text{id}, \sigma) \leq \rem_1(\text{id},\mu)\ \wedge\ \rem_2(\text{id}, \sigma) \leq \rem_2(\text{id},\mu)\\
            \rem_1(\text{id}, \sigma) \leq \rem_1(\text{id},\nu)\ \wedge\ \rem_2(\text{id}, \sigma) > \rem_2(\text{id},\nu)
        \end{array}
    \end{align}
    Otherwise, we say that $\rem_1, \rem_2$ are \emph{consistent}.
\end{definition}

The first line of~\eqref{eq:inconsistency} guarantees that the reversed metric $\tilde \rem_2 = -\rem_2$ is still {non-consistent} \color{black} with $\rem_1$. Proving consistency between two metrics is much trickier than finding three rankings satisfying the inconsistency condition; therefore, rather than classify them, we estimate the degree of inconsistency among pairs of metrics by introducing the \emph{agreement ratio}. 
The coefficient provides an estimate of the extent to which two metrics disagree in the evaluation of rankings over symmetric groups.



\begin{definition}
    For any $\sigma\in S_n$ fixed, the \emph{$\sigma$ agreement ratio} among two ranking evaluation metrics, $\rem_1$ and $\rem_2$ is
    \begin{equation*}
        \text{AR}_{\rem_1,\rem_2}^{\sigma} = \frac{1}{|\mathcal{P}(S_n)|(|\mathcal{P}(S_n)|-1)}\sum_{\mu,\nu \in \mathcal{P}(S_n), \mu\neq\nu } f^{\rem_1,\rem_2}_{\sigma}(\nu, \mu)
    \end{equation*}
    where $f^{\rem_1,\rem_2}_{\sigma}(\nu,\mu) = \mathbbm{1}\{m_1, m_2 \text{ are consistent w.r.t. } \sigma\text{ on the rankings }\mu, \nu\}$ (or equivalently $ f^{\rem_1,\rem_2}_{\sigma}(\nu,\mu) =\mathbbm{1}\{\eqref{eq:inconsistency} \text{ is not satisfied}\}$), $\mathbbm{1}$ is the indicator function  and $\mathcal{P}(S_n)$ is the power set over $S_n$.\color{black}
\end{definition}

As the size of $\mathcal{P}(S_n)$ grows exponentially, we randomly sample a subset $\mathcal{T}$ of $\mathcal{P}(S_n)$ thus obtaining an estimate of the number of inconsistencies existing among two metrics. 
The agreement ratio equals $1$ if $\rem_1$ and $\rem_2$ are consistent and goes to zero with increasing inconsistencies found; furthermore, the agreement ratio is a symmetric metric.  \par 
The color-code heatmap in Figure~\ref{fig:heatmap} highlights, respectively, in green and pink, the existence of a high agreement and disagreement; a partial agreement is represented in white.   
It is visible that similarly theoretically defined metrics as grouped Table~\ref{tab:listmetrics} tend to have an agreement ratio closer to $1$.\color{black} 
The agreement ratio represents an estimate of the number of inconsistencies among metrics; Figure~\ref{fig:heatmap} refers to rankings in $S_{100}$, where $\mathcal{T}$ contains $10000$ random rankings. For CMB metrics, we fixed to $30$ the number of retrieved and relevant elements. We use as reference ranking $\sigma$ the identity function previously defined; however, we would get similar colored heatmaps using other reference rankings.\color{black}



\section{Properties for ranking evaluation metrics}\label{sec:properties}

Most pairs of metrics are affected by frequent inconsistent evaluations (cf. Section~\ref{sec:clustering}). 
We list essential mathematical properties to highlight the peculiarity of each metric and give the chance to properly select one or another based on them for a context-dependent evaluation. 
The properties in question are: (1) \emph{identity of indiscernibles} (IoI); (2) \emph{symmetry} (or \emph{independence from a ground truth}); (3) \emph{robustness} (Type-I and Type-II); (4) \emph{stability} with respect to $k$; (5) \emph{sensitivity} and \emph{width-swap-dependency}; (6) (induced) \emph{distance}. 
Some of them have been defined in other domains,  wee.g.,~\cite{gosgens_systematic_2021,gosgens_good_2021} define the symmetry property for cluster similarity indices and metrics for classification models, \cite{farzad} defines the \enquote{resistance to relabeling} in the context of rank aggregation, \cite{cook1986axiomatic,fligner1986distance} define the importance of constructing distances for partial orderings; will refer to each of them in the respective sections. \color{black}

For each property, we will highlight in which context and why it is important.
Table~\ref{tab:comparison} and Table~\ref{tab:summary} help the reader to keep trace of the mentioned results. The code will be on GitHub upon acceptance \footnote{\url{https://github.com/chiarabales/rankingsEvalMetrics}}. 


\newcolumntype{H}{>{\centering\arraybackslash}m{1.2cm}}
\newcolumntype{Y}{>{\centering\arraybackslash}m{0.5cm}}

  \renewcommand{\arraystretch}{1.15}
\let\svtabcolsep\tabcolsep
\setlength\tabcolsep{0pt}

\setlength{\extrarowheight}{2pt}
\let\svtabcolsep\tabcolsep
\setlength\tabcolsep{0pt}
\begin{table*}[!t]
\centering

\begin{tabular}{HHHHH|YYYYYYYYYYYYYYYYYYYY|YYYYYY|YY|YY|YYYY|}
        \rotatebox{90}{ranking length} 
        & \rotatebox{90}{relevant}
        & \rotatebox{90}{baseline}
        & \rotatebox{90}{$\sigma$}
        & \rotatebox{90}{$\tau$}
        & \rotatebox{90}{CMB metrics}
        %
        & \rotatebox{90}{MSE}
        & \rotatebox{90}{RMSE}
        & \rotatebox{90}{MAE}
        & \rotatebox{90}{MAPE}
        & \rotatebox{90}{SMAPE}
        & \rotatebox{90}{R$^2$ score}
        %
        & \rotatebox{90}{Kendall's $\tau$}
        & \rotatebox{90}{Spearmann $\rho$}
        & \rotatebox{90}{DCG}
        & \rotatebox{90}{nDCG}
        & \rotatebox{90}{MRR}
        & \rotatebox{90}{GMR}
        & \rotatebox{90}{NDPM}
        & \rotatebox{90}{meanRank}
\\\hline 
    \centering
    10 & 5 & \text{id} & $(1\ 2)$ & \text{id} & $\circ$ & $\bullet$ & $\bullet$ & $\bullet$ & $\bullet$ & $\bullet$ & $\bullet$ & $\bullet$ & $\bullet$ & $\bullet$ & $\bullet$ & $\bullet$ & $\bullet$ & $\bullet$ & $\bullet$\\
    10 & 5 & \text{id} & $(1\ 2)$ & $(3\ 4)$ & $\circ$ & $\circ$ & $\circ$ & $\circ$ & $\bullet$ & $\bullet$ & $\circ$ & $\circ$ & $\circ$ & $\bullet$ & $\bullet$ & $\circ$ & $\circ$ & $\circ$ & $\circ$\\
    10 & 5 & \text{id} & $(1\ 2)$ & $(2\ 4)$ & $\circ$ & $\bullet$ & $\bullet$ & $\bullet$ & $\circ$ & $\circ$ & $\bullet$ & $\bullet$ &$\bullet$ & $\bullet$ & $\bullet$ & $\circ$ & $\circ$ & $\bullet$ &$\circ$ \\
    \end{tabular}
    \caption{\label{tab:identityofindiscrnibles}Examples of rankings that metrics cannot distinguish. We compare for each evaluation metric $\rem$ the values $\rem(\text{id}, \sigma)$ and $\rem(\text{id}, \tau)$. If the metric fails in distinguishing the two rankings, we impute a $\circ$; else, a $\bullet$.}
\end{table*}


\subsection{Identity of indiscernibles}\label{sec:identityofindiscernibles}


Ideally, a metric $\rem$ quantifies how \enquote{close} or \enquote{similar} two rankings $\sigma$ and $\tau$ are. 
However, situations may arise where $\sigma$ and $\tau$ are \enquote{so} similar to be practically indistinguishable by some metrics. 
This effect might be undesired in some fields, such as (fair) rank aggregation, where even small differences, especially in the presence of protected groups, make the difference between fair and unfair rankings. 
\begin{definition}
    A metric $\rem$ satisfies the \emph{identity of indiscernible (IoI) property} if, $\forall\sigma\in S_n$ fixed, the following holds 
    \begin{equation}\label{eq:identityofindiscernibles}
        \rem(\sigma, \tau) = \rem(\sigma, \nu) \Leftrightarrow \tau = \nu, \qquad \forall \tau, \nu \in S_n.
    \end{equation}
\end{definition}
Up to renaming the elements, we can rewrite Equation~\eqref{eq:identityofindiscernibles} as $\rem(\text{id}, \tau) = \rem(\text{id}, \nu) \Leftrightarrow \nu = \tau$ where $\text{id}$ is the usual identity of $S_n$. \par 
Almost all metrics do not satisfy the IoI property;  examples are set-based metrics and metrics$_{@k}$, i.e., where a metric $\rem_{@k}$ evaluates only the top $k$ elements of the rankings \color{black}. Clear examples not satisfying \eqref{eq:identityofindiscernibles} are rankings $\sigma$ that can be written as a disjoint composition of cycles of permutations of elements before and after $k$~\cite{hall2018theory}. Table~\ref{tab:identityofindiscrnibles} illustrates examples for each metric where the IoI is not satisfied. 
It can be proven that
\begin{proposition}\label{prop:DCG_identityofindiscernibles}
    DCG and nDCG satisfy the IoI property.
\end{proposition}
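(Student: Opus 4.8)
The plan is to reduce the claim for nDCG to the one for DCG, and then to show that, with the reference ranking fixed to $\mathrm{id}$, the map $\tau\mapsto \mathrm{DCG}(\mathrm{id},\tau)$ is injective on $S_n$ (this is exactly the content of \eqref{eq:identityofindiscernibles} after the renaming already made in the text). For the reduction, recall that $\mathrm{nDCG}(\mathrm{id},\tau)=\mathrm{DCG}(\mathrm{id},\tau)/\mathrm{IDCG}$, where the ideal value $\mathrm{IDCG}=\mathrm{DCG}(\mathrm{id},\mathrm{id})$ is a strictly positive constant independent of $\tau$; hence $\mathrm{nDCG}(\mathrm{id},\tau)=\mathrm{nDCG}(\mathrm{id},\nu)$ iff $\mathrm{DCG}(\mathrm{id},\tau)=\mathrm{DCG}(\mathrm{id},\nu)$, so it suffices to treat DCG. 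Writing $\mathrm{DCG}(\mathrm{id},\tau)=\sum_{i=1}^n g(i)\,b(\tau(i))$ with gains $g(1)>\cdots>g(n)>0$ (as is the case for graded relevance induced by the reference ordering) and discounts $b(j)=1/\log_2(j+1)$, the two structural features I will exploit are that the gains are distinct and strictly decreasing and that the discount sequence $b(1)>\cdots>b(n)>0$ is strictly convex.

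Next I would re-index the sum by position rather than by item: $\mathrm{DCG}(\mathrm{id},\tau)=\sum_{p=1}^n g(\tau^{-1}(p))\,b(p)$, so that the value is the inner product of the fixed discount vector $(b(1),\dots,b(n))$ with the vector $G^\tau=(g(\tau^{-1}(1)),\dots,g(\tau^{-1}(n)))$, which is the rearrangement of the fixed multiset $\{g(1),\dots,g(n)\}$ induced by $\tau$. Injectivity is thus equivalent to saying that distinct rearrangements of this multiset have distinct inner products with the discount vector, i.e. that for $\tau\neq\nu$ one has $\sum_{p=1}^n c_p\,b(p)\neq 0$, where $c_p=g(\tau^{-1}(p))-g(\nu^{-1}(p))$ is a nonzero vector whose entries are differences of gains and which satisfies $\sum_p c_p=0$ (both arrangements carry the same total gain).

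The key manipulation is summation by parts. Setting $d_k=b(k)-b(k+1)>0$ and the partial sums $C_k=\sum_{p=1}^k c_p$ (so $C_n=0$), one obtains $\sum_{p=1}^n c_p\,b(p)=\sum_{k=1}^{n-1} d_k\,C_k$, where $C_k$ is precisely the difference between the total gain that $\tau$ and $\nu$ place in the top $k$ positions. If all the $C_k$ share a sign — which happens exactly when one ranking dominates the other in the top-$k$ gain (a weak majorization condition) — then, since every $d_k>0$, the sum vanishes only if every $C_k=0$; and $C_k=0$ for all $k$ forces $G^\tau=G^\nu$ and hence $\tau=\nu$. This already settles every comparable pair, and in particular distinguishes $\mathrm{id}$ (which maximizes each top-$k$ gain sum) from every other permutation.

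The main obstacle is the general case of incomparable $\tau,\nu$, where the partial sums $C_k$ have mixed signs and cancellation in $\sum_k d_k C_k$ is a priori possible; indeed, for an arithmetic (non-convex) discount such collisions genuinely occur, so strict convexity cannot be dispensed with. I would close this gap in one of two ways. The first is to push convexity one step further: because the increments $d_k$ are themselves strictly decreasing, a second summation by parts rewrites $\sum_k d_k C_k$ as a positively weighted combination of the second partial sums of $c$ (plus a boundary term), and the aim is to show these cannot all cancel for a genuine rearrangement difference. The second, more arithmetic route is to prove that the weights $\{1/\log_2(j+1)\}_{j=1}^n$ admit no nontrivial rational relation compatible with the bounded coefficients $c_p$; the delicate point is that these reciprocal-logarithm weights do carry some rational relations (e.g. $1/\log_2 4=\tfrac12\cdot 1/\log_2 2$), so the argument must use both the constraint $\sum_p c_p=0$ and the fact that $c$ is a permutation difference to rule them out. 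Excluding this residual cancellation is the crux of the proof.
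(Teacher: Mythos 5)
Your reduction of nDCG to DCG via the constant normalization factor is exactly the paper's first step, and your Abel--summation argument is sound as far as it goes: writing $\mathrm{DCG}(\mathrm{id},\tau)-\mathrm{DCG}(\mathrm{id},\nu)=\sum_{k=1}^{n-1}d_k C_k$ with $d_k>0$ and concluding $\tau=\nu$ whenever the top-$k$ gain partial sums $C_k$ all share a sign correctly settles every majorization-comparable pair, in particular every pair in which one of the two rankings is the extremal permutation. For comparison, the paper's own proof is an induction on $n$ establishing only that $\mathrm{DCG}(\sigma)\neq\mathrm{DCG}(\mathrm{id})$ for $\sigma\neq\mathrm{id}$, and it passes from a general pair $(\tau,\nu)$ to this special case by an \enquote{without loss of generality} step; since DCG is not invariant under relabeling (the paper itself records that DCG fails Type II robustness), that reduction is not actually lossless, so the published argument never confronts the incomparable case either. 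Your partial result therefore covers at least as much ground as the paper's, by a cleaner and non-inductive route.

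The case you flag as open --- incomparable $\tau,\nu$ for which the $C_k$ change sign --- is a genuine gap, and neither of your proposed escapes closes it. Route (a) cannot succeed even in principle: strict monotonicity plus strict convexity of the discounts does not imply injectivity. Take $n=4$, discounts $b=(1,\,0.6,\,0.4,\,0.25)$ (strictly decreasing with strictly decreasing increments, hence strictly convex) and the gain arrangements $(4,1,3,2)$ and $(3,4,1,2)$: the coefficient vector is $c=(1,-3,2,0)$, it satisfies $\sum_p c_p=0$, and $\sum_p c_p b_p=b_1-3b_2+2b_3=0$, so two distinct permutations collide. A second summation by parts sees only convexity and so can never exclude this cancellation; any complete proof must exploit the specific arithmetic of the weights $1/\log_2(j+1)$, which is your route (b). But showing that no relation $\sum_p c_p/\log_2(p+1)=0$ with $\sum_p c_p=0$ is realizable by a nonzero permutation difference is a question about rational relations among reciprocal logarithms, for which you give no concrete plan and which does not follow from elementary estimates. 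As written, the proof is incomplete at its central step; if the proposition is only needed for comparisons against a fixed extremal reference ranking (which is how both you and the paper actually use it), your majorization argument already suffices and the claim should be stated in that weaker form.
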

The proof finds place in Appendix~\ref{app:proofs}.
\setlength{\extrarowheight}{2pt}
\newcolumntype{S}{>{\raggedright\arraybackslash}m{0.37cm}}
\newcolumntype{K}{>{\raggedright\arraybackslash}m{1.7cm}}

\begin{table*}[!t]

\centering
\begin{tabular}{KSSSSSSSSSSSSSSSSSSSSSSSSSSSSSSSSSSSSSS|YYYYYY|YY|YY|YYYY|}
        & \rotatebox{90}{recall}
        & \rotatebox{90}{FNR}
        & \rotatebox{90}{fallout}
        & \rotatebox{90}{TNR}
        & \rotatebox{90}{precision}
        & \rotatebox{90}{FDR}
        & \rotatebox{90}{NPV}
        & \rotatebox{90}{FOR}
        & \rotatebox{90}{accuracy}
        & \rotatebox{90}{BA}
        & \rotatebox{90}{F1 score}
        & \rotatebox{90}{FM}
        & \rotatebox{90}{MCC}
        & \rotatebox{90}{Jaccard index}
        & \rotatebox{90}{markedness}
        & \rotatebox{90}{LR-}
        & \rotatebox{90}{informedness}
        & \rotatebox{90}{PT}
        & \rotatebox{90}{LR+}  
        & \rotatebox{90}{MSE}
        & \rotatebox{90}{RMSE}
        & \rotatebox{90}{MAE}
        & \rotatebox{90}{MAPE}
        & \rotatebox{90}{SMAPE}
        & \rotatebox{90}{R$^2$ score}
        & \rotatebox{90}{Kendall's $\tau$}
        & \rotatebox{90}{Spearmann $\rho$}
        & \rotatebox{90}{NDPM}
        & \rotatebox{90}{DCG}
        & \rotatebox{90}{nDCG}

        & \rotatebox{90}{MRR}
        & \rotatebox{90}{GMR}
        & \rotatebox{90}{meanRank} 
\\\hline 
    {id. indisc. \qquad} & \xmark & \xmark & \xmark & \xmark & \xmark & \xmark & \xmark & \xmark & \xmark & \xmark & \xmark & \xmark & \xmark & \xmark & \xmark & \xmark & \xmark & \xmark &\xmark & \xmark & \xmark & \xmark & \xmark & \xmark & \xmark & \xmark & \xmark & \xmark & \cmark & \cmark & \xmark & \xmark &  \xmark \\
    {symmetry \qquad} & \cmark & \cmark & \cmark & \cmark & \cmark & \cmark & \cmark & \cmark & \cmark & \cmark & \cmark & \cmark & \cmark & \cmark & \cmark & \cmark & \cmark & \cmark &  \cmark &\cmark & \cmark & \cmark & \xmark & \xmark & \cmark & \cmark & \cmark &  \cmark &\xmark & \xmark & \xmark &\xmark & \xmark\\ 
    rob I & \cmark  &\cmark  &\cmark  &\cmark  &\cmark  &\cmark  &\cmark  &\cmark  &\cmark  &\cmark  &\cmark  &\cmark  &\cmark  &\cmark  &\cmark  &\cmark  &\cmark  &\xmark &\xmark & \xmark & \xmark & \xmark & \xmark & \xmark &\xmark &\cmark  &\xmark &\cmark  & \xmark &\cmark  &\cmark  & \xmark & \xmark\\

    {rob II } & \xmark & \xmark & \xmark & \xmark & \xmark & \xmark & \xmark & \xmark & \xmark & \xmark & \xmark & \xmark & \xmark & \xmark & \xmark & \xmark & \xmark & \xmark &  \xmark & \cmark & \cmark & \cmark & \cmark & \xmark & \cmark & \cmark & \cmark &  \xmark &\xmark & \xmark & \xmark & \xmark & \xmark \\
    {WSD } & \xmark & \xmark & \xmark & \xmark & \xmark & \xmark & \xmark & \xmark & \xmark & \xmark & \xmark & \xmark & \xmark & \xmark & \xmark & \xmark &  \xmark & \xmark &\xmark  & \xmark & \xmark & \xmark & \xmark & \xmark & \xmark & \cmark & \cmark&  \cmark  & \xmark & \xmark & \xmark & \xmark & \xmark  \\
    {sensitivity } & \xmark & \xmark & \xmark & \xmark & \xmark & \xmark & \xmark & \xmark & \xmark & \xmark & \xmark & \xmark & \xmark & \xmark & \xmark & \xmark & \xmark & \xmark &  \xmark & \xmark & \xmark & \xmark & \cmark & \cmark & \xmark & \xmark & \xmark &  \cmark&\cmark & \cmark & \cmark & \cmark & \cmark    \\
    {stability} & \cmark & \cmark & \xmark & \xmark & \cmark & \cmark & \xmark & \xmark & \xmark & \xmark & \cmark & \cmark & \cmark & \xmark & \cmark& \xmark & \cmark & \cmark & \cmark & \cmark & \cmark & \cmark & \cmark & \cmark & \cmark & \cmark & \cmark & \cmark & \cmark & \cmark & \cmark & \xmark & \xmark \\
    {distance} & \xmark & \xmark & \xmark & \xmark & \xmark & \xmark & \xmark & \xmark & \xmark & \xmark & \xmark & \xmark & \xmark & \xmark & \xmark & \xmark & \xmark & \xmark &  \xmark & \xmark & \cmark & \cmark & \xmark & \xmark & \xmark & \cmark & \xmark &  \xmark & \cmark & \cmark & \xmark & \xmark & \xmark \\
    
\end{tabular}
\caption{\label{tab:comparison}Summary table of the property satisfied by the metrics.}

\end{table*}


\subsection{Symmetry property}\label{sec:symmetry}


Often, guarantees that the evaluation is symmetric with respect to input items are desirable~\cite{gosgens_good_2021,gosgens_systematic_2021}, particularly when the interest is in having a sort of mathematical distance, e.g., for rank aggregation. \color{black}
However, as usual, the context rules the need for a symmetric evaluation. The symmetry property studies whether the metric's evaluation is independent of the order in which the rankings are compared.  In RS and IR, the common presence of a \enquote{ground truth order} makes the symmetric property impossible.

\begin{definition}
    A metric $\rem:S_n\times S_n \rightarrow \mathbb{R}$ is \emph{symmetric} if 
    \begin{equation}
        \rem(\sigma,\nu) = \rem(\nu,\sigma),\qquad\forall\sigma, \nu \in S_n.
    \end{equation}
\end{definition}



\subsection{Robustness}\label{sec:robustness}

The IoI property studies whether metrics can distinguish rankings, regardless of their similarity. 
On the other side, the similarities among rankings should be projected on the evaluations: 
Small differences in rankings should result in small differences in the evaluation scores. 
Under the assumption that a single swap represents a small difference between two rankings, the \emph{Type I  robustness} property assesses how sensitive a ranking evaluation metric is to single swaps in the compared rankings.
\begin{definition}

     A metric $\rem$ is \emph{Type I Robust} if a \emph{single swap} in one of the rankings implies small changes in its evaluation, i.e.,  
     \begin{equation}\label{eq:swap}
            |\rem(\sigma, \nu) - \rem(\sigma, \nu \circ (i\ j)| < \epsilon, \qquad \forall \sigma, \nu \in S_n.
        \end{equation}
\end{definition}
        
We compute the average of the results of~\eqref{eq:swap} evaluated on a set $\mathcal{I} \subseteq {S_n}\times {S_n}$ of $1000$ different randomly drawn pairs of rankings in $S_{100}$, i.e., $\sum_{(\sigma,\nu)\in \mathcal{I}, (i, j)\in \{1,\cdots,n\}^2}|\rem(\sigma, \nu) - \rem(\sigma, \nu \circ (i\ j)|$ and round it to two decimal numbers. We state that the metric satisfies the Type I Robustness if the resulting average is $0$.

For completeness, we define a second type of robustness that studies the effect of renaming the items in the rankings. 
\cite{diaconis} mentions Type II Robustness as \enquote{right-invariance} and \cite{farzad} as \enquote{resistance to item relabeling}. \color{black}
\begin{definition}
     A metric is \emph{Type II Robust} if it is an invariant w.r.t. the composition of permutations, i.e., it holds 
     \begin{equation*}
         \rem(\mu, \sigma) = \rem(\mu \circ \nu, \sigma \circ \nu), \forall \mu, \sigma, \nu \in S_n.
 \end{equation*}
\end{definition}
\color{black}\emph{Type II Robustness} property investigates whether applying the same change in both rankings affects the evaluation. The property is essential in contexts where the numbers appearing in the rankings have to be considered as proper \enquote{items} or \enquote{items' names}; this is often the case in rank aggregation approaches, Recommender Systems, and Information Retrieval techniques. 
However, it does not apply when dealing with importance scores. 
We claim that
\begin{proposition}\label{prop:robustness3}
    \emph{MSE}, \emph{RMSE}, \emph{MAE}, \emph{MAPE}, \emph{R$^2$ score}, \emph{Kendall's $\tau$ score} and \emph{Spearmann's $\rho$} are the only considered metrics satisfying the Type II Robustness.
\end{proposition}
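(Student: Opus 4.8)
The plan is to prove the statement in two halves: first that each of the seven listed metrics is right-invariant, and then that every other metric in Table~\ref{tab:comparison} violates right-invariance via an explicit counterexample. The unifying device for the positive half is a single substitution. Writing $\nu$ for the relabeling permutation and setting $j=\nu(i)$, the map $i\mapsto\nu(i)$ is a bijection of $\{1,\dots,n\}$, so the multiset of position-wise pairs $\{(\mu(\nu(i)),\sigma(\nu(i))) : i\}$ equals $\{(\mu(j),\sigma(j)) : j\}$. Hence any metric expressible as a function of this unordered multiset is automatically Type II Robust, and the whole positive direction reduces to recognizing that form.

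First I would verify formula-by-formula that MSE, RMSE, MAE, MAPE and Spearmann's $\rho$ each have exactly this shape: a sum over positions of a function of the single pair $(\mu(i),\sigma(i))$, possibly followed by a fixed monotone transformation. The R$^2$ score needs one extra remark, namely that the mean $\overline{\sigma}=(n+1)/2$ appearing in its normalizing term is constant across all permutations and therefore unaffected by right-composition, so both its numerator and denominator are position-wise aggregates. Kendall's $\tau$ requires the analogous statement one level up: it is a symmetric aggregate over unordered index pairs $\{i,j\}$, and since $\nu$ permutes these pairs bijectively, the counts of concordant and discordant pairs are invariant. Each verification is routine once the multiset viewpoint is in place.

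For the negative half I would exhibit, for each remaining metric, a triple $(\mu,\sigma,\nu)$ with $\rem(\mu,\sigma)\neq\rem(\mu\circ\nu,\sigma\circ\nu)$; taking $\nu$ to be a single transposition on a small $S_n$ suffices. The confusion-matrix-based metrics (recall, precision, F1 score, MCC, Jaccard index, and the rest) are governed by a relevance notion tied to item identity rather than to position, so a relabeling that moves a relevant item out of the top block changes the entries of the confusion matrix: applying one well-chosen swap to $\text{id}$ already perturbs the true-positive count, and I would check that this perturbation is visible in each such metric. The position-weighted metrics—DCG, nDCG, MRR, GMR and meanRank—fail for the complementary reason, since their value depends on the absolute positions at which identity-bound relevant items land, which relabeling alters.

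The hard part will be the negative half, and within it the borderline metrics that superficially resemble the admitted ones. In particular SMAPE is built from the same position-wise ingredients as MAPE yet is excluded, and NDPM is a pair-based, correlation-style quantity like Kendall's $\tau$ and Spearmann's $\rho$ yet is also excluded; for both I would return to their precise definitions in Appendix~\ref{app:defmetrics} to isolate the identity-dependent ingredient—an asymmetric normalization referenced to a fixed ranking for SMAPE, and a reference-dependent contradiction/tie structure for NDPM—and then design a transposition-based counterexample that exploits exactly that ingredient. Establishing that the seven listed metrics are the \emph{only} ones thus reduces to (i) the uniform multiset invariance argument and (ii) a finite, metric-by-metric exhibition of counterexamples, the delicate cases being precisely those whose failure is not apparent from the aggregate form alone.
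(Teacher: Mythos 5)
Your positive half is sound and in fact cleaner than the paper's own argument: the paper proves invariance by decomposing each sum around the two indices of a single swap $(j\ k)$ and implicitly relies on writing a general $\nu$ as a product of transpositions, whereas your observation that right-composition with $\nu$ merely permutes the index set, hence leaves invariant any metric that is a function of the multiset $\{(\mu(i),\sigma(i))\}_i$ (or, for Kendall's $\tau$, of the corresponding multiset indexed by unordered pairs), handles arbitrary $\nu$ in one stroke. Your extra remarks --- that $\overline{\sigma}=(n+1)/2$ is permutation-independent for the $R^2$ score, and that $\nu$ acts bijectively on unordered index pairs for Kendall's $\tau$ --- are exactly the details the paper leaves implicit.

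The gap is in the negative half, and you have already put your finger on it without drawing the conclusion. By your own multiset lemma, SMAPE as defined in Appendix~\ref{app:defmetrics}, namely $\frac{100}{n}\sum_i 2|\sigma(i)-\tau(i)|/(\sigma(i)+\tau(i))$, is a position-wise aggregate of a function of the single pair $(\sigma(i),\tau(i))$ --- there is no asymmetric, identity-referenced normalization to exploit --- so it is Type II robust and no transposition-based counterexample exists. The same holds for NDPM: each of $C_+,C_-,C_u,C_s$ is a sum over index pairs $(i,j)$ of a function of $(\sigma(i),\sigma(j),\tau(i),\tau(j))$, so right-composition by $\nu$ only reindexes the summands and NDPM is invariant as well. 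Your plan to design counterexamples ``exploiting exactly that ingredient'' therefore cannot be carried out for these two metrics, and the ``only'' claim of the proposition cannot be established from the definitions given in the appendix; either these two metrics belong on the list, or the definitions actually intended must differ from the stated ones. The paper's proof does not resolve this either --- its uniqueness step only sketches why cumulative-gain-based and confusion-matrix-based metrics fail and is silent on SMAPE and NDPM --- so you should flag this as an inconsistency rather than try to force the counterexamples through.
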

The proof derives directly from their definitions (see Appendix~\ref{app:proofs}).


\subsection{Sensitivity}\label{sec:sensitivity}


The sensitivity property is valuable for a metric, particularly in the case of Recommender Systems and Information Retrieval, where high dimensional rankings may not be fully explored. 
Under the assumption that a full exploration of the rankings is not possible, sensitive metrics assign more weight to the first part of the rankings, considering whether the first $k$ items are \enquote{correctly} ranked. 
Mathematically, we define the \emph{width} of a swap $(i\ j)\in S_n$ being the quantity $|i-j|$. \color{black}
\begin{definition}
 Given $i <j <k<l \in \{1, \ldots,n\}$ and $(i\ j), (l\ k)$ having the same width. A ranking evaluation metric $\rem$ is \emph{sensitive} if $\exists\sigma\in S_n$ such that it holds $\rem((i\ j)\circ \sigma) \neq \rem ((k\ l)\circ \sigma)$.
\end{definition} 
\color{black}As the evaluation of the property is far from easy, we introduce the \emph{width swap dependency}, formalizing a property that prevents the metrics from being sensitive. 
\begin{definition}
     Given a swap $(i\ j)\in S_n$ and $|i-j|$ its \emph{width}, $\rem$ is \emph{width swap dependent} (WSD) if it evaluates swaps with the same width equally, i.e., $\rem((i\ j)) = \rem ((k\ l))$ if $|i-j| = |k-l|$ holds; otherwise, it is called \emph{non-width swap dependent}.
\end{definition}
\color{black}The WSD property cuts out some of the metrics from being sensitive.
From their definitions, it can be proven that 
\begin{lemma}\label{lem:tau_widthswapdependent}
    Kendall's $\tau$, Spearmann $\rho$, NDPM  are \emph{width swap independent}.
\end{lemma}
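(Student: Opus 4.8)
The plan is to prove the claim by direct computation: for each of the three metrics I evaluate it on a single transposition $(i\ j)$ compared against $\text{id}$ and show that the resulting number depends only on the width $w = |i-j|$, not on the absolute positions $i,j$. Once this is established, any two swaps $(i\ j)$ and $(k\ l)$ with $|i-j| = |k-l|$ necessarily receive the same score, which is exactly the width-swap property claimed. Throughout I may assume $i < j$ without loss of generality and write $w = j - i$.

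First I would handle Kendall's $\tau$. Since it is (up to a fixed normalization depending only on $n$) the number of discordant pairs between $\text{id}$ and $(i\ j)$, the task reduces to counting the inversions of the permutation $(i\ j)$. The only pairs whose order can be reversed are those involving position $i$ or position $j$; a short case analysis over the $j-i-1$ positions strictly between $i$ and $j$ shows that each such position contributes two inversions, and the pair $\{i,j\}$ itself contributes one, for a total of $2(j-i)-1 = 2w-1$. As this count and the normalizing constant both depend only on $w$ and $n$, Kendall's $\tau$ on a single swap is a function of $w$ alone.

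Next I would treat Spearman's $\rho$, whose definition rests on the sum of squared rank displacements $\sum_{\ell}(\sigma(\ell)-\ell)^2$. Under $(i\ j)$ only the entries at positions $i$ and $j$ move, each by exactly $|i-j| = w$, so the sum equals $2w^2$; the fixed $n$-dependent normalization then leaves a function of $w$ only. For NDPM I would argue analogously to Kendall's $\tau$: NDPM is assembled from the numbers of concordant, discordant, and tied pairs relative to the (here total) reference order $\text{id}$, and the only discordances produced by $(i\ j)$ are the $2w-1$ pairs already identified, while the total number of comparable pairs is determined by $n$. Hence NDPM, too, depends on the swap only through $w$.

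The three computations together give the claim. I expect the single genuine obstacle to be bookkeeping rather than ideas: I must verify that the normalizing constants of Kendall's $\tau$ and NDPM, together with NDPM's treatment of ties (with the reference ranking being the total order $\text{id}$), introduce no term depending on the positions $i,j$ themselves — otherwise width-invariance could fail even though the raw inversion count is width-determined. Confirming this from the precise definitions in Appendix~\ref{app:defmetrics} is the careful part; the inversion count $2w-1$ and the displacement sum $2w^2$ are the substantive combinatorial facts.
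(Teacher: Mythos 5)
Your proof is correct and takes essentially the same route as the paper's: each metric is reduced to a quantity that depends on the swap only through its width --- the discordant-pair count $2w-1$ for Kendall's $\tau$ and NDPM, and the squared-displacement sum $2w^2$ for Spearman's $\rho$ --- with the $n$-dependent normalizations contributing nothing position-dependent. The only (cosmetic) difference is that you obtain the count $2w-1$ by a direct case analysis over the positions between $i$ and $j$, whereas the paper derives the same count by induction on the width, arriving at the identical formula $K_\tau = \bigl(\tbinom{n}{2}-4w+2\bigr)/\tbinom{n}{2}$.
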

The proof finds place in Appendix~\ref{app:proofs}. For the other metrics,it is trivial to find pairs of disjoint swaps had different effects in the final evaluation when happening at various positions within the rankings. 


\subsection{Stability}\label{sec:stability}


We introduce the stability property for those metrics that can be applied on \enquote{rankings $@k$}. 
We recall that a ranking at $k$ is the ranking of the items in the first $k$ positions.
To evaluate rankings $@k$, it is essential that the difference between evaluations \enquote{$@k$} and \enquote{$@k+1$} is not significant, i.e., that the choice of $k$ does not highly impact the result; this guarantees a trustworthy evaluation. 
\begin{definition}  
     A ranking evaluation metric $\rem$ is \emph{stable} if, for any two rankings $\sigma,\nu\in S_n$, it holds 
     \begin{equation}
        \label{eq:stability}
        \left|\rem_{@k-1}(\sigma,\nu) - \rem_{@k}(\sigma, \nu)\right| < \epsilon_k \end{equation}
      with $\epsilon_k$ small. Moreover, the sequence $\{\epsilon_k\}_k$ satisfies $\lim_{k\to n}\epsilon_k = 0$.
\end{definition}
The property is again essential for extremely long rankings and for contexts where rankings are not fully explored.
We evaluate the stability by randomly drawing $1000$ pairs of rankings in $S_{100}$, computing the absolute differences of~\eqref{eq:stability}, and counting the number of times that~\eqref{eq:stability} holds with $\epsilon_k = \frac{1}{k}$. We state that a metric is stable if the criterion is satisfied in at least $97.5 \%$ of the cases. \color{black} 


\subsection{Distance}\label{sec:distance}

In mathematics, the terms metric and distance are synonyms. 
However, when it comes to evaluation metrics, most of them are not \enquote{distances} on $S_n$ in the mathematical sense. 
Whether a metric is a mathematical distance or not is often insignificant for the final evaluations; however, being aware of this fundamental mathematical difference can avoid incomprehension and misuses. 
\begin{definition}\label{def:distance}
    A \emph{distance} on a set $X$ is a function $f_{\rem}: X \times X \rightarrow [0, \infty): (x,y) \mapsto f_{\rem}(x,y) \in \mathbb{R}_+$ that, for all $x,y,z \in X$, satisfies: 
    \begin{enumerate}
        \item $f_{\rem}(x,y) =0 \Leftrightarrow  x=y$,
        \item the \emph{positive definiteness}, i.e., $f_{\rem}(\sigma, \nu) \geq 0, \forall \sigma, \nu \in X$,
        \item the \emph{symmetry} property and
        \item the \emph{triangle inequality}, i.e., $f_{\rem}(x,y) \leq f_{\rem}(x,z)+f_{\rem}(z,y)$.
    \end{enumerate}
\end{definition}
Some ranking evaluation metrics are distances; In~\cite{farzad,diaconis}, it is proven that Kendall's $\tau$ is a distance.
However, a ranking evaluation metric that does not satisfy some of the properties mentioned in Definition~\ref{def:distance} is not a distance.

We investigate if we can induce distances from single input metrics. 
Given a metric $\rem:S_n \rightarrow \mathbb{R}$, we consider two options as potential induced distances, i.e., $f_{\rem}(\sigma, \nu) = \rem(\sigma) - \rem(\nu)$ or $\tilde f_{\rem}(\sigma, \nu) = | \rem(\sigma) - \rem(\nu) | $. DCG and nDCG are the only two metrics satisfying the IoI property that, for metrics with one unique argument, is equivalent to Property (1) for $f_{\rem}$. 
We can easily prove that 
\begin{proposition}\label{prop:distance}
    $f_{\rem}$ is \emph{not} a distance while $\tilde f_{\rem}$ is a distance with the IoI property, where $\rem$ is either DCG or nDCG.
\end{proposition}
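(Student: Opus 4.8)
The plan is to verify each of the four distance axioms separately for both candidate functions $f_{\rem}$ and $\tilde f_{\rem}$, where $\rem$ is either DCG or nDCG, exploiting the fact (from Proposition~\ref{prop:DCG_identityofindiscernibles}) that these are the only single-input metrics satisfying the IoI property. First I would dispose of $f_{\rem}(\sigma,\nu) = \rem(\sigma) - \rem(\nu)$ by a quick counterexample: since $\rem$ is real-valued and non-constant, there exist $\sigma,\nu$ with $\rem(\sigma) > \rem(\nu)$, so $f_{\rem}(\sigma,\nu) > 0$ while $f_{\rem}(\nu,\sigma) < 0$. This simultaneously violates positive definiteness (Axiom 2) and symmetry (Axiom 3), so $f_{\rem}$ is not a distance, and indeed fails to even map into $[0,\infty)$ as the definition requires.

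The substantive half is showing $\tilde f_{\rem}(\sigma,\nu) = |\rem(\sigma) - \rem(\nu)|$ is a genuine distance. Symmetry and positive definiteness (Axioms 2 and 3) are immediate from properties of the absolute value: $|\rem(\sigma)-\rem(\nu)| = |\rem(\nu)-\rem(\sigma)| \geq 0$. The triangle inequality (Axiom 4) follows directly from the triangle inequality for $|\cdot|$ on $\mathbb{R}$, writing $\rem(\sigma)-\rem(\nu) = (\rem(\sigma)-\rem(\rho)) + (\rem(\rho)-\rem(\nu))$ and taking absolute values. The one axiom requiring real content is the identity of indiscernibles, Axiom 1: $\tilde f_{\rem}(\sigma,\nu) = 0 \Leftrightarrow \sigma = \nu$. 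The direction $\sigma=\nu \Rightarrow \tilde f_{\rem}=0$ is trivial; the converse is exactly where the IoI property of DCG and nDCG is indispensable, since $|\rem(\sigma)-\rem(\nu)| = 0$ means $\rem(\sigma) = \rem(\nu)$, and the single-input IoI property then forces $\sigma = \nu$.

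I expect the main obstacle to be the Axiom 1 converse, and more precisely pinning down the correct form of the IoI property for single-input metrics. The Definition of IoI in Equation~\eqref{eq:identityofindiscernibles} is stated for two-input metrics as $\rem(\sigma,\tau)=\rem(\sigma,\nu) \Leftrightarrow \tau=\nu$; the excerpt asserts that for single-argument metrics this is ``equivalent to Property (1) for $f_{\rem}$,'' so I would begin by making that translation explicit, namely that the single-input IoI property reads $\rem(\tau) = \rem(\nu) \Leftrightarrow \tau = \nu$. Once this reformulation is fixed, the argument is purely mechanical: injectivity of $\rem$ (which is what single-input IoI says) combined with the elementary properties of the absolute value yields all four axioms for $\tilde f_{\rem}$. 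The final claim that $\tilde f_{\rem}$ satisfies the IoI property as a distance is then simply a restatement of Axiom 1, which we have just verified.
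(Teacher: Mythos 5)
Your proposal is correct and follows essentially the same route as the paper's proof: dispose of $f_{\rem}$ via its anti-symmetry and failure of positive definiteness, and verify the four axioms for $\tilde f_{\rem}$ using the elementary properties of the absolute value for symmetry, non-negativity, and the triangle inequality, with Proposition~\ref{prop:DCG_identityofindiscernibles} (i.e., injectivity of DCG/nDCG on $S_n$) supplying the only non-trivial axiom. Your explicit non-constancy counterexample for $f_{\rem}$ is in fact slightly cleaner than the paper's corresponding sentence, but the substance is identical.
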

The formal proof can be found in Appendix~\ref{app:proofs}. 

\section{Are the metrics interpretable? Thoughts over maximal and minimal agreement properties }\label{sec:interpretability}


Given the importance of trust, fairness, and explainability for machine learning methods, one could then ask how \enquote{interpretable} the scores assigned by the metrics are. We first need some definitions.
\begin{definition}\label{def:maximalagreement}
    A ranking evaluation metric $\rem$ is said to satisfy the \emph{maximal agreement property} if (a) $\rem(\sigma, \sigma) = \rem_{\max}, \forall\sigma\in S_n$ and (b) $\rem(\sigma, \nu) \leq \rem_{\max}, \forall\nu, \sigma\in S_n$. We say that $\rem$ is \emph{lower-bounded} if it exists a real number $\rem_{\min}$ such that $\rem(\sigma, \nu) \geq \rem_{\min}, \forall\nu, \sigma\in S_n$. An evaluation metric that admits a lower bound is said to satisfy the \emph{minimal agreement} property.
\end{definition}
For a metric to be \enquote{interpretable} we expect that
\begin{enumerate}
    \item each ranking is maximally similar to itself and, given $n\in \mathbb{N}$, this value is constant, i.e., $\rem(\sigma,\sigma) = \rem_{\max}, \forall \sigma \in S_n$ and $\forall n$; 
    \item $\rem$ satisfies the maximal agreement property;
    \item there exists a lower bound $\rem_{\min}$ for any possible pair of rankings, i.e., $\rem(\sigma,\mu) \geq \rem_{\min}, \forall\sigma, \mu \in S_n$.
\end{enumerate}

Exemplary is the Kendall's $\tau$ metric which satisfies $\rem(\sigma, \sigma) =1 $ and $\rem(\sigma, \mu)\in [-1,1]$ for all $\sigma, \mu\in S_n$. \color{black}
The maximal agreement property says that each ranking is maximally similar to itself, and no other ranking can achieve a higher score than $\rem_{\max}$; furthermore, ideally, $\rem_{\max}$ is independent of the length of the rankings. Properties (1) and (2) imply that a ranking evaluation metric is a monotone increasing function of the similarity of two rankings: the more similar two rankings are, the higher the score they get when evaluated using an \enquote{interpretable} metric. 
Having that $\rem_{\max}$ is independent of $n$ is a necessary condition for having an evaluation of rankings independent of $n$. \color{black} 
However, this is hardly satisfied by any metrics, and only after introducing a normalization score do the metrics satisfy the requirement. 
Furthermore, the lowest scores are assigned by some metrics to maximally similar pairs of rankings, e.g., error-based metrics. 
The only metrics, among the ones considered in this paper, automatically satisfying this property are Kendall's $\tau$ score and Spearmann $\rho$. \par 
A ranking evaluation metric satisfying the maximal agreement property is also \emph{upper-bounded}. 
For the sake of interpretability, we could check whether a metric $\rem$ satisfies $\rem(\rho^{-1},\rho)= \rem_{\min}$ where $\rho^{-1}$ indicates the inverse ranking. 
How do we define the inverse of a ranking? 
Kendall's $\tau$ satisfies this property, given that the inverse of one ranking $\sigma$ is the ranking $\tau$ assigning the highest position to the last element of the ranking $\sigma$; however, this does not correspond with the inverse of the ranking in the symmetric group. 
Assessing whether metrics for permutations are humanly interpretable is not new and has already been discussed in~\cite{diaconis}. 
However, then, as well as now, the concept of interpretability lacks a unified definition. 
Thus, we leave this section open and do not argue further on the interpretability of the considered metrics.

 \newcolumntype{P}{>{\raggedright\arraybackslash}m{2.2cm}}
\newcolumntype{T}{>{\raggedright\arraybackslash}m{6cm}}
\setlength{\extrarowheight}{5pt}

\begin{table*}[!t]
\centering
\begin{tabularx}{\textwidth}{PTTTTTT}
        & description & domain
        \\\midrule 
        \makecell[l]{identity of \\indiscernibles} & \makecell[l]{in highly sensitive evaluations, \\where detecting tiny differences\\among rankings is essential} & \makecell[l]{(Fair) rank aggregation \\ Recommender Systems \\ Feature ranking/selection} \\\midrule
        {symmetry} & \makecell[l]{ensures that the input rankings\\ have an equal role in the evaluations} & \makecell[l]{Rank aggregation \\
        Contexts independent from ground truths} \\\midrule
        {robustness I} & \makecell[l]{ensures that small changes \\influence proportionately\\ the evaluations } & \makecell[l]{Information Retrieval \\ Rank aggregation} \\\midrule
        {robustness II} & \makecell[l]{ensures independence from \\items renaming} & \makecell[l]{Information Retrieval \\ Rank aggregation \\Feature ranking \\Rank aggregation} \\\midrule
        {sensitivity } & \makecell[l]{for not fully explored rankings, \\ when the interest is on \\to the top part of the rankings} & \makecell[l]{Information Retrieval \\ Recommender Systems}\\\midrule
        {stability} & \makecell[l]{ensures trustworthiness\\ in evaluations $@k$} & \makecell[l]{Information Retrieval \\ Recommender Systems}\\\midrule
        {distance} & \makecell[l]{ensures that the metric in questions \\respect the definition\\ of distance on $S_n$} &\makecell[l]{(Fair) rank aggregation}
\end{tabularx}
\caption{\label{tab:summary} Summary of the properties}
\end{table*}

\section{Discussion}

We explored metrics for comparing and evaluating rankings and analyzed their theoretical properties. 
All the mentioned metrics are widely used in the literature to evaluate Recommender Systems, Information Retrieval, feature ranking, rank aggregation methods, and items' score assignments. 
Each property is highly desirable in some contexts and less in others. 
The IoI property is desirable in highly sensitive evaluations, where detecting tiny differences among rankings is essential; fair ranking aggregation is an example, where swapping items can make the difference between fair and unfair rankings. 
Conversely, robustness ensures that small changes influence the evaluations proportionately in a one-to-one fashion. 
A metric that satisfies both the IoI and the robustness properties ensures contemporaneously that small changes are not overlooked but do not significantly impact the evaluations. 
The symmetry property ensures that the input rankings have an equal role in the evaluations. This is essential in most domains unless ground truth ranking is available. 
Note that non-symmetric metrics are also not distances. 
Rank aggregation is again an example of use for the symmetry property, where the consensus ranking is directly compared with the original rankings provided. 
Sensitivity is crucial when rankings are not fully explored. This is often the case for Recommender Systems and Information Retrieval techniques' evaluations. 
With the same applicability, the stability property ensures trustworthiness in evaluations $@k$, which is again highly relevant for Recommender Systems and Information Retrieval techniques. 
To assure stable evaluations, we recommend considering evaluating the impact $@k$ and $@(k+i)$ with $i$ arbitrarily chosen, in particular when $k <<n$. Finally, the distance property is defined to complete the proposed analysis and highlights the chance that mathematical terms are misused in machine learning contexts.
Table~\ref{tab:summary} summarizes the properties' descriptions and application domains.

\section{Conclusion}
Throughout the paper, we explored metrics widely used in the literature to evaluate Recommender Systems, Information Retrieval, feature selection, and rank aggregation methods; rankings are the common output of all these methods. 
We observed a common presence of {non-consistent} evaluations of rankings, deriving from the different definitions of the ranking evaluation metrics. 
Focusing on a mathematical perspective and viewing rankings as elements of symmetric groups and the metrics as functions defined over mathematical groups, we list a set of well-founded mathematical properties for ranking evaluation metrics. 
The differences among metrics are highlighted by the differences in the satisfiability of the properties, thus grounding the reasons for the inconsistencies in the evaluations.
As each property is highly desirable in some contexts and less in others, we summarize the obtained insights that can be of immediate use when looking for an appropriate metric for a specific domain.

\bibliography{main} 

\begin{thebibliography}{10}

\bibitem{adomavicius_toward_2005}
Gediminas Adomavicius and Alexander Tuzhilin.
\newblock Toward the next generation of recommender systems: A survey of the
  state-of-the-art and possible extensions.
\newblock {\em IEEE transactions on knowledge and data engineering}, 17(6),
  2005.

\bibitem{amigo_axiomatic_2018}
Enrique Amig{\'o}, Damiano Spina, and Jorge Carrillo-de Albornoz.
\newblock An axiomatic analysis of diversity evaluation metrics: Introducing
  the rank-biased utility metric.
\newblock In {\em SIGIR}, 2018.

\bibitem{buckley_retrieval_2004}
Chris Buckley and Ellen~M Voorhees.
\newblock Retrieval evaluation with incomplete information.
\newblock In {\em SIGIR}, 2004.

\bibitem{cook1986axiomatic}
Wade~D Cook, Moshe Kress, and Lawrence~M Seiford.
\newblock An axiomatic approach to distance on partial orderings.
\newblock {\em RAIRO-Operations Research}, 20(2), 1986.

\bibitem{diaconis}
Persi Diaconis.
\newblock Group representations in probability and statistics.
\newblock {\em Lecture notes-monograph series}, 11, 1988.

\bibitem{10.1145/371920.372165}
Cynthia Dwork, Ravi Kumar, Moni Naor, and Dandapani Sivakumar.
\newblock Rank aggregation methods for the web.
\newblock In {\em WWW}, 2001.

\bibitem{fligner1986distance}
Michael~A Fligner and Joseph~S Verducci.
\newblock Distance based ranking models.
\newblock {\em Journal of the Royal Statistical Society: Series B
  (Methodological)}, 48(3), 1986.

\bibitem{gosgens_good_2021}
Martijn G{\"o}sgens, Anton Zhiyanov, Aleksey Tikhonov, and Liudmila
  Prokhorenkova.
\newblock Good classification measures and how to find them.
\newblock In {\em NIPS}, 2021.

\bibitem{gosgens_systematic_2021}
Martijn~M G{\"o}sgens, Alexey Tikhonov, and Liudmila Prokhorenkova.
\newblock Systematic analysis of cluster similarity indices: How to validate
  validation measures.
\newblock In {\em ICML}, 2021.

\bibitem{gunawardana_evaluating_2015}
Asela Gunawardana, Guy Shani, and Sivan Yogev.
\newblock Evaluating recommender systems.
\newblock In {\em Recommender systems handbook}. 2012.

\bibitem{hall2018theory}
Marshall Hall.
\newblock {\em The theory of groups}.
\newblock Courier Dover Publications, 2018.

\bibitem{farzad}
Farzad~Farnoud Hassanzadeh and Olgica Milenkovic.
\newblock An axiomatic approach to constructing distances for rank comparison
  and aggregation.
\newblock {\em IEEE Transactions on Information Theory}, 60(10), 2014.

\bibitem{herlocker_evaluating_2004}
Jonathan~L Herlocker, Joseph~A Konstan, Loren~G Terveen, and John~T Riedl.
\newblock Evaluating collaborative filtering recommender systems.
\newblock {\em ACM Transactions on Information Systems (TOIS)}, 22(1), 2004.

\bibitem{hoyt_unified_2022}
Charles~Tapley Hoyt, Max Berrendorf, Mikhail Galkin, Volker Tresp, and
  Benjamin~M Gyori.
\newblock A unified framework for rank-based evaluation metrics for link
  prediction in knowledge graphs.
\newblock {\em arXiv preprint arXiv:2203.07544}, 2022.

\bibitem{jarvelin_cumulated_2002}
Kalervo J{\"a}rvelin and Jaana Kek{\"a}l{\"a}inen.
\newblock Cumulated gain-based evaluation of ir techniques.
\newblock {\em ACM Transactions on Information Systems (TOIS)}, 20(4), 2002.

\bibitem{kendall_new_1938}
Maurice~G Kendall.
\newblock A new measure of rank correlation.
\newblock {\em Biometrika}, 30(1/2), 1938.

\bibitem{khaire2022stability}
Utkarsh~Mahadeo Khaire and R~Dhanalakshmi.
\newblock Stability of feature selection algorithm: A review.
\newblock {\em Journal of King Saud University-Computer and Information
  Sciences}, 34(4), 2022.

\bibitem{lin2010rank}
Shili Lin.
\newblock Rank aggregation methods.
\newblock {\em Wiley Interdisciplinary Reviews: Computational Statistics},
  2(5), 2010.

\bibitem{liu2009learning}
Tie-Yan Liu et~al.
\newblock Learning to rank for information retrieval.
\newblock {\em Foundations and Trends{\textregistered} in Information
  Retrieval}, 3(3), 2009.

\bibitem{mogotsi_christopher_2010}
Hinrich Sch{\"u}tze, Christopher~D Manning, and Prabhakar Raghavan.
\newblock {\em Introduction to information retrieval}, volume~39.
\newblock 2008.

\bibitem{sculley2007rank}
D~Sculley.
\newblock Rank aggregation for similar items.
\newblock In {\em SDM}, 2007.

\bibitem{silveira_how_2019}
Thiago Silveira, Min Zhang, Xiao Lin, Yiqun Liu, and Shaoping Ma.
\newblock How good your recommender system is? a survey on evaluations in
  recommendation.
\newblock {\em International Journal of Machine Learning and Cybernetics}, 10,
  2019.

\bibitem{tamm_quality_2021}
Yan-Martin Tamm, Rinchin Damdinov, and Alexey Vasilev.
\newblock Quality metrics in recommender systems: Do we calculate metrics
  consistently?
\newblock In {\em RecSys}, 2021.

\bibitem{valcarce_robustness_2018}
Daniel Valcarce, Alejandro Bellog{\'\i}n, Javier Parapar, and Pablo Castells.
\newblock On the robustness and discriminative power of information retrieval
  metrics for top-n recommendation.
\newblock In {\em RecSys}, 2018.

\bibitem{valcarce_assessing_2020}
Daniel Valcarce, Alejandro Bellog{\'\i}n, Javier Parapar, and Pablo Castells.
\newblock Assessing ranking metrics in top-n recommendation.
\newblock {\em Information Retrieval Journal}, 23, 2020.

\bibitem{yao_measuring_1995}
YY~Yao.
\newblock Measuring retrieval effectiveness based on user preference of
  documents.
\newblock {\em Journal of the American Society for Information science}, 46(2),
  1995.

\end{thebibliography}
\bibliographystyle{plain}

\appendix
\newpage

\section{Proofs}\label{app:proofs}

\begin{proof}[Proof of Proposition~\ref{prop:DCG_identityofindiscernibles}]
    As DCG and nDCG differ only for a constant multiplicative factor, we prove the claim only for DCG. As we deal with pure rankings on symmetric groups, we use the convention $\text{rel}_i = \sigma(i)$ representing a rescaling of the relevance score to distinguished integer numbers; Given $\sigma\in S_n$, we use the following definition $\text{DCG}(\sigma) = \sum_{i = 1}^n \frac{\sigma(i)}{\log_2(i+1)}$. \par 
    The goal is proving that for any $\sigma_1, \sigma_2\in S_n$, $\text{DCG}(\sigma_1) = \text{DCG}(\sigma_2) \Leftrightarrow \sigma_1 = \sigma_2$. Without loss of generality, we prove that $\text{DCG}(\text{id}) = \text{DCG}(\sigma) \Leftrightarrow\sigma = \text{id}$ for any $\sigma\in S_n$, i.e., $\sum_{i = 1}^n \frac{i -\sigma(i)}{\log_2(i+1)} =0$. As this is not straightforward, we prove instead a stronger version
    \begin{equation}\label{eq:thesisInequality}
        \sum_{i = 1}^n \frac{i -\sigma(i)}{\log_2(i+1)} < 0 \Leftrightarrow \sigma \neq \text{id}\in S_n.
    \end{equation}
    We base our proof on induction over $n$. \par  \vspace{0.2cm}
    \textbf{Base case:} The base case $n=2$ is trivial as $S_2=\{\text{id}, \sigma = (1\ 2)\}$; in particular, $\text{DCG}(\text{id}) = 0$ while 
    \begin{equation*}
        \text{DCG}(\sigma) = \frac{1-\sigma(1)}{\log_22} + \frac{2-\sigma(2)}{\log_23} =-\frac{1}{\log_22} + \frac{1}{\log_23} < 0
    \end{equation*} \vspace{0.2cm}
    \textbf{Inductive case:} The claim holds for $n-1$ and we prove it for $n$; consider $\sigma\in S_n$. We distinguish two cases. \vspace{0.2cm}\\
    \textbf{One element is fixed by $\sigma$:} Up to renaming the elements, we suppose that $n$ is fixed by $\sigma$, i.e., $\sigma(n) = n$. Given $n, k\in \mathbb{N}$, we can construct an immersion $i_{n,k}:\sigma \in S_n \mapsto i_{n,k}(\sigma)\in S_{n+k}$ of $S_n$ in $S_{n+k}$, such that $i_{n,k}(\sigma)(j) = \sigma(j)$ if $ j \leq n$ otherwise $i_{n,k}(\sigma)(j) = j$; $i_{n,k}$ is injective and surjective on $A = \{\sigma \in S_{n+k}\mid \sigma(j) = j, \forall j > n+k\}$ and $\sigma $ fixes $n$, $\sigma$ belongs to $S_{n-1}$ (as the counter-image of $i_{n,1}$). Therefore, the claim holds. \vspace{0.2cm}\\
    \textbf{No element is fixed by $\sigma$:} It holds $\sigma(n) \neq n$ and we can rewrite $\sigma$ as the composition of two permutations, i.e., $\sigma = \tau \circ \mu$ such that $\tau =(j\ n)$ for some fixed $j$ and $\mu$ such that $ \mu(s)= \sigma(s)$ if $ s\neq n, k^*$, $ \mu(s) = j$ {if} $s = k^*$ and $ \mu(s) = n $ if $ s= n$ where we named $k^*=\mu^{-1}(j) = \sigma^{-1}(n)$. We can now rewrite $\sigma$ in terms of $\tau \circ \mu$;
        \begin{align*}
            &\sum_{i = 1}^n \frac{i -\sigma(i)}{\log_2(i+1)} =  \\
            &\sum_{i = 1, i \neq k^*}^{n-1} \frac{i -\sigma(i)}{\log_2(i+1)} + \frac{k^* -\sigma(k^*)}{\log_2(k^*+1)} + \frac{n -\sigma(n)}{\log_2(n+1)} = \\
            & \sum_{i = 1, i \neq k^*}^{n-1} \frac{i -\mu(i)}{\log_2(i+1)} + \frac{k^* -\tau \circ\mu(k^*)}{\log_2(k^*+1)} + \\&\frac{n -\sigma(n)}{\log_2(n+1)} +\frac{k^* -\mu(k^*)}{\log_2(k^*+1)} - \frac{k^* -\mu(k^*)}{\log_2(k^*+1)} = \\
            & \sum_{i = 1}^{n-1} \frac{i -\mu(i)}{\log_2(i+1)} + \frac{k^* -\tau (j)}{\log_2(k^*+1)} + \\&\frac{n -\sigma(n)}{\log_2(n+1)}  - \frac{k^* -\mu(k^*)}{\log_2(k^*+1)} 
        \end{align*}
        $ \sum_{i = 1}^{n-1} \frac{i -\mu(i)}{\log_2(i+1)}$ is negative for the inductive hypothesis and we assume that $\mu\neq \text{id}\in S_{n-1}$. By substituting $\sigma = \tau \circ \mu $, we conclude the proof if we can upper bound their sum with $0$. 
        \begin{align*}
            &\frac{k^* -\tau (j)}{\log_2(k^*+1)} + \frac{n -\sigma(n)}{\log_2(n+1)} - \frac{k^* -\mu(k^*)}{\log_2(k^*+1)} = \\&\frac{k^* -n - (k^* -\mu(k^*))}{\log_2(k^*+1)} + \frac{n -\sigma(n)}{\log_2(n+1)} = \\
            &\frac{\mu(k^*) -n}{\log_2(k^*+1)} + \frac{n -\sigma(n)}{\log_2(n+1)} <  \\& \frac{\mu(k^*) -n}{\log_2(k^*+1)} + \frac{n -\sigma(n)}{\log_2(k^*+1)} = 0
        \end{align*}
        where we used $\log_2(n+1) > \log_2(k^*+1) $, $\sigma(n) = \tau \circ\mu(n) = \tau(n) = j$ {and} $\mu(k^*) = j$. Thus, the claim is proved for $\mu \neq \text{id}$. In the case $\mu = \text{id}$: Then it holds $\sigma = \tau$ and $\text{DCG}(\sigma)$ reads
    \begin{align*}
        &\sum_{i = 1}^n \frac{i -\sigma(i)}{\log_2(i+1)} = \sum_{i = 1}^n \frac{i -\tau(i)}{\log_2(i+1)} =\\& \frac{j -\tau(j)}{\log_2(j+1)} + \frac{n -\tau(n)}{\log_2(n+1)} = \\
        &  \frac{j -n}{\log_2(j+1)} + \frac{n - j}{\log_2(n+1)} < \frac{j -n +(n-j)}{\log_2(j+1)} = 0
    \end{align*}
    This concludes the proof.
\end{proof}


\begin{proof}[Proof of Proposition~\ref{prop:robustness3}]
    \textbf{\emph{MSE}, \emph{RMSE}, \emph{MAE}, \emph{MAPE}, \emph{R$^2$ score}:} decomposing the sum in the definition of $MSE(\sigma \circ (j\ k), \nu \circ (j\ k))$ among addends involving $k$ or $j$ and others, it is easy to get to $MSE(\sigma , \nu)$. Similarly, for the other metrics. \textbf{Kendall's $\tau$:} it is enough to note that the number of discordant and concordant pairs does not change when applying a swap to both the rankings $\sigma$ and $\nu$. \textbf{Spearmann's $\rho$:} similarly to the case of the error based metric, we decompose the sum defining the Spearmann's $\rho$ in elements involving $j$ and $k$ and others; manipulating the definition, we eventually get the thesis. 
    \textbf{Unicity:} For all the other metrics, finding pairs of rankings providing counterexamples is trivial. For cumulative gain based metrics, the swaps change the association between the position in the ranking and the relevance score. For confusion matrix based metrics, swaps change both the set of relevant and retrieved elements (but not equally); Thus, the evaluation is different after applying swaps in both rankings.
\end{proof}


\begin{proof}[Proof of Proposition~\ref{lem:tau_widthswapdependent}]
    \textbf{Spearman's rank correlation coefficient} has an equivalent formulation that depends only on the differences $d_i = \sigma(i) - \nu(i)$; The fact that the elements appearing in the ranking are all distinct implies the WSD property directly. \par
    To prove the claim for \textbf{Kendall's $\tau$} (NDPM is similar), we fix an arbitrary $n$ and a swap $(i\ j)\in S_n$ of width $d$. We proceed by induction on $d$ and prove that Kendall's $\tau$ is based only on $d$, independently from $i$ and $j$. If $d = 1$, then the swap is of the form $(i\ \ i+1)$; in this case, the number of concordant pairs is $\binom{n}{2}-1$, and the only discordant pair is given by $(i\ i+1)$. Recalling the definition of Kendall's $\tau$, we want to prove that 
    \begin{align*}
        K_{\tau} & = \frac{|\{\text{concordant pairs}\}|-|\{\text{discordant pairs}\}|}{\binom{n}{2};} = \\&\frac{\binom{n}{2}-4|i-j|+2}{\binom{n}{2}}.
    \end{align*}
    This holds for $d=1$ as $K_{\tau}(id,(i\ j)) = \frac{\binom{n}{2}-1+ \left(\binom{n}{2}-\left(\binom{n}{2}-1\right)\right)}{\binom{n}{2}}= \frac{\binom{n}{2}-2}{\binom{n}{2}}$. We now suppose that it holds for $d$ and prove it for $d+1$; the number of discordant pairs in a swap of length $d+1$ equals the number of elements that are not anymore concordant with $i$, i.e., $d+1$, plus the number of elements that are not anymore concordant with $j$ minus $1$, i.e., $d$; summing up we get 
    \begin{align*}
        &K_{\tau}(id,(i\ j)) = \\ &\frac{\binom{n}{2}-(2d+1)+ \left(\binom{n}{2}-\left(\binom{n}{2}-(2d+1)\right)\right)}{\binom{n}{2}}= \\ &
         \frac{\binom{n}{2}-4(d+1)+2}{\binom{n}{2}}.
    \end{align*} We conclude that Kendall's $\tau$ is \emph{width-swap-dependent}. 
\end{proof}


\begin{proof}[Proof of Proposition~\ref{prop:distance}]
We must prove the three properties defining a distance for $m = \text{DCG}$ (similar to nDCG, that differs only by a multiplicative factor). \par
\textbf{Identity Property:} Proposition~\ref{prop:DCG_identityofindiscernibles} states that DCG satisfies the IoI property. Furthermore, it follows that $f_{\rem}(\sigma,\nu) = 0 \Leftrightarrow \sigma = \nu$; Similarly, $\tilde f_{\rem}(\sigma,\nu) = 0 \Leftrightarrow \nu = \sigma$.\par
\textbf{Symmetry property:} It is easy to find pairs of permutations $\sigma, \nu \in S_n$ such that $f_{DCG}(\nu, \sigma) = f_{DCG}(\sigma, \nu)$; In particular, $f_{DCG}$ satisfy the anti-symmetric property, i.e., $f_{DCG}(\nu, \sigma) =\ DCG(\nu) - DCG(\sigma) = - [DCG(\sigma) - DCG(\nu)] = - f_{DCG}(\sigma, \nu)$. On the other hand, $ \tilde f_{DCG}$ satisfies the symmetry property.\par

\textbf{Triangle inequality:} The triangle inequality property is satisfied if $\forall\nu, \sigma, \mu\in S_n$ holds $f_{DCG}(\sigma, \mu)\leq f_{DCG}(\sigma, \nu) + f_{DCG}(\nu, \mu)$. Expanding the formula of DCG we get
\begin{align*}
    f_{DCG}(\mu, \sigma)& \  = DCG(\mu) - DCG(\sigma) = \\&DCG(\mu) - DCG(\nu) +DCG(\nu) - DCG(\sigma) = \\&f_{DCG}(\mu, \nu) + f_{DCG}(\nu, \sigma);
\end{align*}
The equality holds $\forall\nu, \sigma, \mu\in S_n$; for $\tilde f_{DCG}$, the property still holds with the inequality:
\begin{align*}
    &\tilde f_{DCG}(\mu, \sigma) =\ |DCG(\mu) - DCG(\sigma)| =\\ & |DCG(\mu) - DCG(\nu) +DCG(\nu) - DCG(\sigma)| \leq\\
   \leq & |DCG(\mu) - DCG(\nu)| + |DCG(\nu) - DCG(\sigma)|  =\\ & \tilde f_{DCG}(\mu, \nu) + \tilde f_{DCG}(\nu, \sigma).
\end{align*}\par
\textbf{Positive definiteness:} $\tilde f_{DCG}$ is defined as an absolute value; the claim obviously holds. Instead, $f_{DCG}$ can assume both positive and negative values. This concludes the proof.
\end{proof}

\section{Metrics' definitions} \label{app:defmetrics}
    We give some insights on the metrics we mentioned and analyzed throughout the paper; we also properly define here some of the terms and metrics we used.
    \subsection{Confusion matrix based metrics} 
    Consider a set $N$ of $n$ elements, a subset ${R}\subset N$ of the relevant elements and a subset ${S}\subset N$ of the retrieved elements; the confusion matrix, i.e., a $C = \mathbb{N}^{2\times2}$, is defined such that $C_{1,1} = |S\cap R|$, $C_{1,2} = |S\setminus R|$, $C_{2,1} = |R\setminus S|$ and $C_{2,2} = n -|S\cup R|$. Each metric in this group is defined on the sizes of intersections, unions or differences among the sets $R,S$ and $N$. Given two rankings $\sigma, \tau\in S_n$ and two natural numbers $j,k<n$, we can define the set of relevant elements ${R}$ being the first $j$ths ranked elements by $\sigma$ and the set of  retrieved elements being the first $k$ths ranked elements by $\tau$, i.e., ${R} =\text{set}\left(\sigma_{|j}\right)$ and ${S} =\text{set}\left(\tau_{|k}\right)$. \par 
    \textbf{General properties.} Defined only on set based quantities, the ordering of the elements appearing before and after $j$ (or $k$) is irrelevant. Some of these metrics represent a powerful tool for evaluating and comparing rankings. Their strength is well founded on the simplicity and interpretability of the definitions; however, one should consider more sophisticated evaluation metrics when the interest is in the rankings rather than the ability to retrieve the relevant elements. 
			
	We shortly define the used metrics and, for sake of readability, we drop the notation $(\sigma,\tau)$; we consider $k = j$ throughout the manuscript. The \emph{precision} represents the fraction of the number of retrieved elements that are relevant, i.e., $\text{precision}= \frac{| {R}\cap S|}{|S|}$. The \emph{recall} represents the fraction of relevant elements successfully retrieved, i.e., $	\text{recall}= \frac{|{R}\cap S|}{|{R} |}$; It is often referred to as \emph{sensitivity}. The \emph{Fallout} represents the proportion of non-relevant elements that are retrieved, i.e., $\text{fallout}=\frac{| (N\setminus {R})\cap S|}{|N\setminus {R}|}$. The \emph{F-score} is the harmonic mean of precision and recall where precision and recall can also be not be evenly weighted $F_{\beta}= \frac{(1+\beta^2)\cdot(\text{precision}+\text{recall})}{\beta^2\text{precision}+\text{recall}}$; 				if $\beta = 1$ then {precision} and {recall} are evenly weighted and we refer to it as F1-score. The \emph{accuracy} is defined as $\text{ACC} = \frac{|S\cap R|+ n-|S\cup R|}{n}$. The \emph{Jaccard index} is defined as $\text{Jaccard} =\frac{|S\cap R|}{|S\cup R|}$. The \emph{Matthews correlation coefficient (MCC)} is defined as  $\text{MCC}=\frac{|R\cap S|(n-|S\cup R|)- |S\setminus R||R\setminus S|}{|S||R|(n-|R|)(n-|S|)}$. Given the quantities $\text{TPR}= \frac{|R\cap S|}{|R|}$, $\text{TNR}=\frac{n-|R\cup S|}{n-|R|}$, $\text{FNR}=1-\text{TPR}$, $\text{FPR}= 1-\text{TNR}$, $\text{FDR}=\frac{|S\setminus R|} {|S|}$ and $\text{NPV}=\frac{n-|R\cup S|} {n-|S|}$, we can also define the \emph{informedness}, i.e., $\text{informedness}= \text{TPR} + \text{TNR}-1$, the \emph{markedness}, i.e., $\text{markedness} =1-\text{FDR}+\text{NPV}-1$, the \emph{false omission rate {FOR}}, i.e., $\text{FOR}=1-\text{NPV}$, the \emph{prevalence threshold {PT}}, i.e., $\text{PT} = \frac{\sqrt{TPR\cdot FPR}-FPR}{TPR-FPR}$, the \emph{Fowlkes–Mallows index FM}, i.e.,  $\text{FM} =\sqrt{(1-FDR)TPR}$, the \emph{balanced accuracy} {BA}, i.e., $\text{BA}=\frac{\text{TPR+TNR}}{2}$, and finally the \emph{Positive likelihood ratio LR+}, i.e., $\text{LR+} = \frac{\text{TPR}}{1-\text{TNR}}$ the \emph{Negative likelihood ratio LR-}, i.e., $\text{LR-} =\frac{1-TPR}{TNR} $.


			\subsection{Correlation measures}
			They explicitly rely on the correlation among the two rankings and are often used in statistical applications. In contrast to confusion matrix based metrics, they consider all the length of the rankings. \emph{Kendall's $\tau$ coefficient} and \emph{Spearmann $\rho$} consider the permutation of the elements over arrays of length $n$~\cite{kendall_new_1938}. The \emph{Kendall's $\tau$ coefficient} is based on the definition of concordant and discordant couples (two elements $i,j$ are \emph{concordant} in $\sigma, \tau$ if $\sigma(i) < \sigma(j)$ and $\tau(i) < \tau(j)$ or the same holds with $>$). In particular, 
			\begin{equation*}
				\tau = \frac{|\{\text{concordant pairs}\}|-|\{\text{discordant pairs}\}|}{\binom{n}{2}}
			\end{equation*}
			Kendall's $\tau$ varies in the interval $[-1,+1]$: $\tau = 1$ if $\sigma$ and $\tau$ agree perfectly while $\tau = -1$ if one ordering is the reverse of the other. Furthermore, if $\sigma$ and $\nu$ are independent then $\tau \approx 0$.\\
			The \emph{Spearmann score} is defined as the Pearson correlation coefficient and in the case that the $n$ ranks are distinct integers, it can be computed using the formula
			\begin{equation*}
				r = 1-\frac{6\sum_{i=1}^n (\sigma(i)-\tau(i))^2}{n(n^2-1)}.
			\end{equation*}
			
			As a drawback, correlation based metrics assign the same importance to the first part as to the last part of the rankings; as they equally evaluate exchanges in the first-ranked items and in the ending part of the ranking, they do not properly fit with evaluating orderings. Both Spearmann's $\rho$ and Kendall's $\tau$ directly penalize swaps of 'further located' elements; Considering two rankings $\sigma,\tau$ that differ for one single swap $(i\ j)$, if $i,j$ are far in the rankings, then they will evaluate their difference as being bigger as if they would be nearer. \par
			Finally, the \emph{Normalized Distance-based Performance Measure} NDPM from~\cite{yao_measuring_1995}: Given $\sigma,\tau$ and the following quantities
			\begin{align*}
				&C_+ = \sum_{i,j} \text{sgn}(\sigma{(i)}-\sigma{(j)})\text{sgn}(\tau{(i)} - \tau{(j)})\\
				&C_- = \sum_{i,j} \text{sgn}(\sigma{(i)}-\sigma{(j)})\text{sgn}(\tau{(j)} - \tau{(i)})\\
				&C_u = \sum_{i,j} \left[\text{sgn}(\sigma{(i)}-\sigma{(j)})\right]^2\\
				&C_s = \sum_{i,j} \left[\text{sgn}(\tau{(i)} - \tau{(j)})\right]^2\\
				&C_{u0} = C_u-C_+-C_-
			\end{align*}
		 	from their combination, we get $NDPM(\sigma,\tau) = \frac{C_-+\frac{1}{2}C_{u0}}{C_u}$.
			
			\subsection{Cumulative gain based metrics}\label{app:dcg}
			Constructed with the specific aim of evaluating whether the ordering of relevant elements is respected, they use relevance scores assigned to each element. We assume that the relevance score of an element $i$ is represented by all different relevant scores; in particular, we consider $\text{rel}_i = \sigma(i)$ allowing to fairly compare with other metrics that do not have access to relevance scores of items, but only their position. We initially considered assigning $\text{rel}_i =1 $ to all relevant elements; however, this immediately would imply that DCG and nDCG do not satisfy the IoI property as relevant elements are indistinguishable. The \emph{Discounted cumulative gain DCG} assumes that highly relevant items appearing lower in a search result list should be penalized as the graded relevance value is scaled to be logarithmically proportional to the position of the item; the definition reads $\text{DCG} = \sum_{i=1}^n \frac{\sigma(i)}{\log_2(i+1)}$. The \emph{Normalized discounted cumulative gain nDCG} is a normalization of the DCG through the normalization coefficient IDCG, computed by sorting all elements by their relative relevance, and producing the maximum possible DCG. The two metrics nDCG and DCG are linearly equivalent; Thus, as Figure~\ref{fig:heatmap} already empirically showed, there are no inconsistencies among them. \par
			Strictly connected to the cumulative gain metrics is the \emph{Mean Reciprocal Rank} MRR that evaluates the position of each relevant element in the ranking and computes the average of the reciprocal positional ranking of the results, i.e., $\text{MRR} = \frac{1}{|{R}|}\sum_{i=1}^{|{R}|} \frac{1}{\sigma(i)} $; thus, it is the relevance scores' \emph{harmonic mean}. The \emph{meanRank} is defined as $\text{meanRank}(\sigma) = \frac{1}{|{R}|}\sum_{i=1}^{|{R}|} {\sigma(i)} $ and the \emph{GMR} as the geometric mean of the first $k$ ranked elements of the ranking $\sigma$.

			\subsubsection{Error based metrics} Although meant to evaluate continuous and discrete labels, error-based metrics found application in evaluating rankings. They do not consider the ordering of items but compute the difference in each position, sum it all together, and return an average. We will briefly provide the definitions of each metric in this section. Among them, we find the \emph{mean squared error} MSE, defined as $MSE(\sigma,\tau) = \sum_{i=1}^n (\sigma(i)-\tau(i))^2$ for $\sigma,\tau \in S_n$, and the \emph{mean absolute error} MAE, defined as $MAE(\sigma,\tau) = \sum_{i=1}^n |\sigma(i)-\tau(i)|$. The \emph{rooted mean squared error} RMSE and the \emph{rooted mean absolute error} RMAE are their respective rooted versions, i.e., $RMSE = \sqrt{MSE}$ and $RMAE=\sqrt{MAE}$. The \emph{symmetric mean absolute percentage error} SMAPE is defined as $SMAPE(\sigma, \tau) = \frac{100}{n}\sum_{i=1}^n 2\frac{|\sigma(i)-\tau(i)|}{\sigma(i)+\tau(i)}$   for $\sigma,\tau \in S_n$. Finally, the R$2$ score is defined as $\text{R}2\text{score} = 1- \frac{\sum_{i=1}^n(\sigma(i)-\tau(i))^2}{\sum_{i=1}^n(\sigma(i)-\frac{1}{n}\sum_{j=1}^n\tau(j))^2}$.

\end{document}